\newtheorem{conj}{Conjecture}
\DeclareMathOperator{\spn}{span}
\DeclareMathOperator{\rank}{rank}
\newcommand{\mbf}[1]{\mathbf{#1}}
\def\bgamma{{\boldsymbol \gamma}}
\begin{document}

\title{Gain and Loss of Function mutations in biological chemical reaction networks: a mathematical model with application to colorectal cancer cells
}



\author{Sara Sommariva        \and
        Giacomo Caviglia \and Michele Piana 
}


\institute{S. Sommariva \at
              Dipartimento di Matematica, Universit\'a di Genova, via Dodecaneso 35 16146 Genova, Italy \\
              Tel.: +39-010-3536644\\
              \email{sommariva@dima.unige.it}           
           \and
              G. Caviglia \at
              Dipartimento di Matematica, Universit\'a di Genova, via Dodecaneso 35 16146 Genova, Italy \\
              Tel.: +39-010-3536830\\
              \email{caviglia@dima.unige.it} 
              \and
               M. Piana \at
              Dipartimento di Matematica, Universit\'a di Genova, and CNR - SPIN GENOVA, via Dodecaneso 35 16146 Genova, Italy \\
              Tel.: +39-010-3536939\\
              \email{piana@dima.unige.it} 
}

\date{}

\maketitle

\begin{abstract}
This paper studies a system of Ordinary Differential Equations modeling a chemical reaction network and derives from it a simulation tool mimicking Loss of Function and Gain of Function mutations found in cancer cells. More specifically, from a theoretical perspective, our approach focuses on the determination of moiety conservation laws for the system and their relation with the corresponding stoichiometric surfaces. Then we show that Loss of Function mutations can be implemented in the model via modification of the initial conditions in the system, while Gain of Function mutations can be implemented by eliminating specific reactions. Finally, the model is utilized to examine in detail the G1-S phase of a colorectal cancer cell. 
\end{abstract}

{\bf  Keywords} Reaction kinetics - Synthetic cell biology - Loss of function mutations - Gain of function mutations - Colorectal cancer cells - G1-S transition point

\section{Introduction} \label{sec:introd}

Signalling networks are Chemical Reaction Networks (CRNs) consisting of an interconnected set of pathways, modeling the flow of chemical reactions initiated by information sensed from the environment through families of receptor ligands \citep{Jordan,Sever,Tyson}. The reactions in the network follow the process of information transfer inside cytosol down to the description of the activity of a few related transcription factors \citep{Karin,kohn_2006}. Similarly to what happens for many networks of biological interest, a  signaling network may comprise hundreds of reacting chemical species and reactions. 

Any CRN can be mapped into a system of ordinary differential equations (ODEs) by following standard procedures \citep{Feinberg,Yu,Chellaboina} thus allowing for simulations of the kinetics of the signaling process in the biologic case (see, e.g., \cite{Anderson,Roy}). 
In principle, the resulting mathematical model is capable of describing healthy physiologic states, and can be adapted to simulate individual pathological conditions associated to mutations. Since most cancer diseases result from an accumulation of a set of mutations \citep{Stratton,Levine}, the numerical solution of this mathematical model represents a convenient computational tool for the simulation of the mechanisms giving rise to a tumor. Further, this kind of models can typically be tuned in order to mimic the effects of targeted therapies \citep{Fachetti,Levine_T} and drug resistance mechanisms \citep{Eduati}.

The present paper first realizes an analysis of the system of ODEs associated to a CRN, characterized by a level of generality sufficient to provide an efficient simulation tool. From a formal viewpoint, we apply mathematical methods devised for the investigation of deterministic homogeneous chemical reaction systems based on mass-action kinetics. However, in our approach a crucial role is played by moiety conservation laws (CLs), which are essential in the determination and interpretation of results \citep{De_Martino, Shinar}. Further, a geometric classification of equilibrium states in terms of stoichiometric surfaces is discussed, together with their stability properties. 

From a more operating viewpoint, we define projection operators, which map the system of ODE for physiologic conditions into the mutated system which models Loss of Function (LoF) and Gain of Function (GoF) mutations \citep{Griffiths,Li}. The model is built and made operative in such a way that it can be modified almost straightforwardly by addition or elimination of chemical reactions or chemical species, change in the values of the rate constants in the formulation of mass-action laws, or change of the initial conditions. As a consequence, the parameters values that have been originally considered as fixed can be customized to fit the tumor data of a specific patient.

As an application, we examine in detail the kinetics of a recently proposed network which simulates how colorectal cancer (CRC) cells process information from external growth factors and the related answer \citep{Tortolina}. The network is focused on the G1-S transition point. In this transition a newborn cell in the G1 phase of its cycle must pass the control of this checkpoint before starting the S phase of synthesis of new DNA \citep{Tyson}; indeed, this is the first necessary step towards proliferation.

The structure of the paper is as follows. Section 2 provides the mathematical background for the modeling of CRNs for cell signaling. Section 3 contains the mathematical model describing LoF and GoF mutation processes. Section 4 is devoted to the application of the model to CRC cells. Our conclusions are offered in Section 5.

\section{Chemical reaction networks for cell signaling}
We consider a CRN consisting of $r$ reactions, denoted as $R_j$, $j=1, \dots , r$, that involve $n$ well-mixed reacting species, denoted as $A_i$, $i=1, \dots , n$. The network is modeled as a dynamical system with state vector $\mbf{x} =(x_1, \dots, x_n)^T \in \mathbb{R}^n_+$, where the upper $T$ denotes transposition, $\mathbb{R}_+$ is the set of non-negative real number, and the generic component $x_i$ is the molar concentration (nM) of the species $A_i$. According to this chemical interpretation, $\mbf{x}$ is also called concentration vector \citep{Yu}. We assume that the law of mass action holds: when two or more reactants are involved in a reaction, the reaction rate is proportional to the product of their concentrations. 
The resulting polynomial system of ODEs for the state variables is written as
 \begin{equation} \label{eq:dot_x}
\dot{\mbf{x}}= \mbf{S} \, \mbf{v}(\mbf{x},\mbf{k})~,
\end{equation}
where the superposed dot denotes the time derivative; $\mbf{k}=\left(k_1, \dots, k_r \right)^T \in \mathbb{R}^r_+$ stands for the set of rate constants; $\mbf{S}$ is the $\mathbb{R}^{n  \times r}$ constant stoichiometric matrix, $\mbf{v}(\mbf{x},\mbf{k}) \in \mathbb{R}^r_{+}$ is the vector of reaction fluxes. Here, system (\ref{eq:dot_x}) accounts for internal reaction and boundary fluxes \citep{Kschischo,Schilling} and hence is open \citep{Feinberg}. The matrix element $S_{ij}$ is the net number of molecules of the species $A_i$ that are produced whenever the reaction $R_j$ occurs. Thus the columns of $\mathbf{S}$ have been referred to as reaction vectors.
 
We are now interested in investigating the general properties of the solutions of the system (\ref{eq:dot_x}), and, in particular, in computing the corresponding asymptotically stable states. Our study is based on the analysis of the conservation laws and the stoichiometric compatibility classes of the system revised in the next two subsections.
 
 \subsection{Conservation laws and elemental species} \label{subsec:CL}
 
\begin{definition} Let $\mbf{x}(t)$ be a solution of the system of ODEs (\ref{eq:dot_x}). A constant vector $\boldsymbol{\gamma} \in \mathbb{N}^n \setminus  \left\{\mbf{0} \right\}$ is said to be a \textit{semi--positive conservation vector} if there exists $c \in \mathbb{R}_+$ such that 
\begin{equation}\label{eq:def_cl}
\bgamma^T \mbf{x}(t) = c \quad \forall t\ . 
\end{equation}
Moreover, the relation (\ref{eq:def_cl}) is called a \textit{semi--positive conservation law}, or equivalently \textit{moiety conservation law}.
\end{definition}

Since concentrations are expressed in nM, the biochemical interpretation of the conservation law is that the total
number of molecules involved in the species combination $\bgamma^T \mbf{x}$ remains constant during the evolution in
time of the network \citep{Shinar}. The constant total amount of available molecules is fixed by $\bgamma$ and the initial state $\mbf{x}_0$ through $c = \bgamma^T \mbf{x}_0$. Moreover, the concentrations of the species involved in the conservation law are bounded from above by the constant $c$ \citep{De_Martino,Haraldsdottir,Shinar,Schuster}. The following proposition relates the conservation vectors to the properties of the stoichiometric matrix.

\begin{proposition}\label{obs:cl_ker_St}
If $\bgamma \in \ker(\mbf{S}^T) \cap \mathbb{N}^n \setminus  \left\{\mbf{0} \right\}$, then $\bgamma$ is a conservation vector.
\end{proposition}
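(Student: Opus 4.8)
The plan is to show directly that the scalar quantity $\bgamma^T \mbf{x}(t)$ has vanishing time derivative along any solution of (\ref{eq:dot_x}), and then to identify its conserved value and confirm it lies in $\mathbb{R}_+$, so that every clause of the definition of a semi--positive conservation vector is met. First I would differentiate $\bgamma^T \mbf{x}(t)$ with respect to $t$; since $\bgamma$ is constant and $\mbf{x}(t)$ solves the system, linearity and the chain rule give
\begin{equation}
\frac{d}{dt}\bigl(\bgamma^T \mbf{x}(t)\bigr) = \bgamma^T \dot{\mbf{x}}(t) = \bgamma^T \mbf{S}\,\mbf{v}(\mbf{x},\mbf{k})~.
\end{equation}

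The key step is to exploit the kernel hypothesis. From $\bgamma \in \ker(\mbf{S}^T)$ we have $\mbf{S}^T \bgamma = \mbf{0}$, and transposing yields $\bgamma^T \mbf{S} = (\mbf{S}^T \bgamma)^T = \mbf{0}^T$. Substituting into the display makes the right-hand side vanish identically, independently of the flux vector $\mbf{v}(\mbf{x},\mbf{k})$; hence $\bgamma^T \mbf{x}(t)$ is constant in time. Evaluating at the initial instant identifies the constant as $c = \bgamma^T \mbf{x}_0$.

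Finally I would verify the required sign condition: because $\bgamma \in \mathbb{N}^n$ has nonnegative integer entries and the concentration vector satisfies $\mbf{x}_0 \in \mathbb{R}^n_+$, the product $c = \bgamma^T \mbf{x}_0$ is nonnegative, so $c \in \mathbb{R}_+$. Combined with the standing hypothesis $\bgamma \in \mathbb{N}^n \setminus \{\mbf{0}\}$, this confirms that $\bgamma$ satisfies every requirement of the definition and is therefore a conservation vector. I do not expect a genuine obstacle here: the whole argument rests on the transposition identity $\bgamma^T \mbf{S} = (\mbf{S}^T \bgamma)^T$, which converts abstract kernel membership into the algebraic cancellation driving the proof, together with the elementary nonnegativity check for $c$. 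The only point deserving a word of care is that $\mbf{x}(t)$ is differentiable, which is immediate since it is assumed to be a solution of (\ref{eq:dot_x}).
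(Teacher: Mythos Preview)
Your argument is correct and follows essentially the same approach as the paper: differentiate $\bgamma^T\mbf{x}(t)$, substitute the ODE, and use $\bgamma\in\ker(\mbf{S}^T)$ to obtain $\bgamma^T\mbf{S}=\mbf{0}^T$ so the derivative vanishes. You add the explicit check that $c=\bgamma^T\mbf{x}_0\in\mathbb{R}_+$, which the paper leaves implicit, but otherwise the proofs are the same.
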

\begin{proof}
The thesis follows by observing that for any solution $\mbf{x}(t)$ of the system of ODEs (\ref{eq:dot_x})
 $$ \frac d{dt} (\bgamma^T \mbf{x}) = \bgamma^T  \dot{\mbf{x}} = \bgamma^T \mbf{S} \mbf{v}(\mbf{x}, \mbf{k})  = 0,$$ 
 where the last term is equal to zero as $\bgamma \in  \ker(\mbf{S}^T)$.
\end{proof}

We concentrate on conservation vectors $\bgamma \in  \ker(\mbf{S}^T)$. 
Proposition \ref{obs:cl_ker_St} implies that the set of all possible semi--positive conservation vectors defines a convex cone whose independent generators can be computed e.g. through the algorithm proposed by \cite{Schuster}. In the following we will denote with $\left\{ \bgamma_1, \dots, \bgamma_p \right\}$ a set of such generators and define the matrix $\mbf{N} \in \mathbb{R}^{p   \times n}$ as
\begin{equation} \label{eq:def_N}
  \mbf{N} = \begin{bmatrix} 
  \bgamma_1^T \\
  \vdots \\
  \bgamma_{p}^T \end{bmatrix} .
\end{equation}

Henceforth we will assume $p = n - \rank(\mbf{S})$ and thus the set $\left\{ \bgamma_1, \dots, \bgamma_p \right\}$ defines a basis for $\ker(\mbf{S}^T)$ 

Denote by $\mbf{x}_0$ the initial point of a trajectory.
Since $\mbf{N} \mbf{S} = 0$, it follows that 
\begin{equation} \label{eq:def_rho}
\mbf{N} \mbf{x}(t) =\mbf{N} \mbf{x}_0=: \mbf{c}
\end{equation}
is the constant vector in $\mathbb{R}^p$ whose components are the constants involved in the CLs. Thus the matrix $\mbf{N}$ determines $p$ linear, independent CLs; accordingly, the representative point $\mbf{x}(t)$ is constrained to move on the
affine subspace of $\mathbb{}{R}^n$ which is determined by equation
(\ref{eq:def_rho}), and hence is identified by $\mbf{N}$ and the initial data. Moreover, the linear affine subspace is the intersection of $p$ hyperplanes.\\

We conclude with a few remarks that will play a fundamental role in the analysis of the solutions of system (\ref{eq:dot_x}).
In general, there are chemical species that are not involved explicitly in the expressions of the CLs while the remaining species may belong to more than one CL.

\begin{definition}\label{def:elements}
We say that a CRN is \textit{elemented} if: (i) it admits a set of generators $\left\{\boldsymbol{\gamma}_1, \dots, \boldsymbol{\gamma}_p \right\}$ such that the matrix $\mbf{N}$ contains at least one minor equal to the identity matrix of order $p$, say $\mbf{I}_p$; (ii) each chemical species is involved in at least one CL, i.e. for each $i=1, \dots, n$ there exists $k \in \{1, \dots, p\}$ such that $\gamma_{ki} \neq 0$.
If only condition (i) is fulfilled, the CRN is \textit{weakly elemented}.

Borrowing the terminology of \cite{Shinar}, the species associated with the minor equal to the identity matrix will be called \textit{elemental} or \textit{basic species}.
\end{definition}

\begin{remark}
Definition \ref{def:elements} implies that each elemental species belongs to one, and only one conservation law. The idea is that elemental species consist of proteins in free form, which bind to other species involved in the same conservation law, in order to form the derived compounds or secondary species.

In general the set of elemental species of a network is not unique as it depends on the choice of the basis of the conservation vectors and, fixed a basis, multiple minors equal to the identity matrix may exist. 
\end{remark}

\begin{remark}
Given an elemented CRN, up to a change in the order of the components of $\mbf{x}$, the matrix $\mbf{N}$ may be decomposed as 
\begin{equation} \label{eq:M_dec}
\mbf{N} = \left[\mbf{I}_p, \mbf{N}_2 \right]
 \end{equation}
 with $\mbf{N}_2 \in \mathbb{R}^{p \times (n-p)}$. 
 Similarly, the concentration vector $\mbf{x}$ may be decomposed as 
 \begin{equation} \label{eq:x_dec}
\mbf{x} = \left(\begin{array}{c} \mbf{x}_1 \\ \mbf{x}_2 \end{array} \right)
\end{equation}
 with $\mbf{x}_1 \in \mathbb{R}^p$, and $\mbf{x}_2 \in \mathbb{R}^{n-p}$. In particular, $\mbf{x}_1 \in \mathbb{R}^p$ is formed by the elemental variables. Thus equation (\ref{eq:def_rho}) may be rewritten in the equivalent form
 \begin{equation} \label{eq:rho_2}
\mbf{x}_1  = \mbf{c} - \mbf{N}_2 \, \mbf{x}_2
\end{equation}
\end{remark}

Henceforth we assume the CRN to be weakly elemented, with elemental variables $x_1$ ... $x_p$, so that the matrix $\mbf{N}$ is described by the block decomposition (\ref{eq:M_dec}). 

According to (\ref{eq:rho_2}), the set of $p$ conservation equations is solved straightforwardly with respect to the elemental variables, thus yielding a parametric description of the affine space defined in (\ref{eq:def_rho}). Furthermore, substitution of the expression (\ref{eq:rho_2}) of $\mbf{x}_1$ into the system (\ref{eq:dot_x}) 
provides a reduced formulation of the original system of ODEs.
\begin{definition}
Consider an elemented CRN. A concentration vector $\mbf{x} \in \mathbb{R}^n$ is an \textit{ideal state} for the network iff only the elemental species have non-zero concentration, i.e., referring to equation (\ref{eq:x_dec}), $\mbf{x}_2 = 0$.
\end{definition}

\subsection{Stoichiometric compatibility classes}

Given the initial condition $\mbf{x}(0)=\mbf{x}_0$, consider the corresponding solution $\mbf{x}(t)$ of the system (\ref{eq:dot_x}), defined at least in the time domain $[0,T]$. Integration in time of both sides of (\ref{eq:dot_x}) leads to
 \begin{equation} \label{eq:int_x}
\mbf{x}(t) - \mbf{x}_0 = \mbf{S} \, \int_0^t \mbf{v}(\mbf{x}(\tau), \mbf{k})  \, d\tau\ ,  \quad t \in [0,T] .
\end{equation}  
This shows that $\mbf{x}(t)-\mbf{x}_0$ is a linear combination of the reaction vectors, with time-dependent coefficients. Therefore $\mbf{x}(t)-\mbf{x}_0$ belongs to a vector space of dimension equal to $\rank(\mbf{S})$ defined by the image of the stoichiometric matrix \citep{Feinberg,Feinberg_ARMA,Yu}. 
Accordingly, we provide the following definition of \textit{stoichiometric compatibility class} (SCC). 

\begin{definition}\label{def:scc}
Given a value $\mbf{x} \in \mathbb{R}^n_+$ of the state variable for system (\ref{eq:dot_x}), we define a \textit{stoichiometric compatibility class} (SCC) of $\mbf{x}$ the set
\begin{equation}
    \mathcal{SC}(\mbf{x}) = \left(\mbf{x} + \spn(\mbf{S}) \right) \cap \mathbb{R}^n_+ \ .
\end{equation}
\end{definition}

\begin{proposition}\label{prop:sc_and_cl}
Let $\left\{\bgamma_1, \dots, \bgamma_p \right\}$ be a set of generators of the the convex cone defined by the semi--positive conservation laws and let $\mbf{N}$ be the matrix defined as in equation (\ref{eq:M_dec}). If $p = n - \rank(\mbf{S})$ then for all $\mbf{x} \in \mathbb{R}^n_+$
\begin{align}
     \mathcal{SC}(\mbf{x}) & =  \left(\mbf{x} + \ker(\mbf{S}^T)^{\perp} \right) \cap \mathbb{R}^n_+ \label{prop:scc_1}\\
    & = \left\{\mbf{y} \in \mathbb{R}^n_+\ \text{s.t.}\ \mbf{N}\mbf{y} = \mbf{N}\mbf{x} \right\} . \label{prop:scc_2}
\end{align} 
\end{proposition}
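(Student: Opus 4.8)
The plan is to reduce both equalities to two elementary facts about orthogonal complements of the image and kernel of a matrix, carrying the non-negativity constraint $\mathbb{R}^n_+$ along unchanged through each intersection. For equality (\ref{prop:scc_1}), I would invoke the fundamental relation $\spn(\mbf{S}) = \ker(\mbf{S}^T)^{\perp}$, valid for any real matrix. Indeed, a vector $\mbf{w}$ is orthogonal to every column of $\mbf{S}$ — equivalently $\mbf{w} \in \spn(\mbf{S})^{\perp}$ — precisely when $\mbf{S}^T \mbf{w} = \mbf{0}$, so that $\spn(\mbf{S})^{\perp} = \ker(\mbf{S}^T)$; taking orthogonal complements in the finite-dimensional space $\mathbb{R}^n$ and using $(U^{\perp})^{\perp} = U$ yields $\spn(\mbf{S}) = \ker(\mbf{S}^T)^{\perp}$. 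Intersecting the translate $\mbf{x} + \spn(\mbf{S})$ with $\mathbb{R}^n_+$ then gives (\ref{prop:scc_1}) directly from Definition \ref{def:scc}.

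For equality (\ref{prop:scc_2}), the key step is to identify $\ker(\mbf{N})$ with $\ker(\mbf{S}^T)^{\perp}$. Here I would use the hypothesis $p = n - \rank(\mbf{S})$: since $\dim \ker(\mbf{S}^T) = n - \rank(\mbf{S}) = p$ and the $p$ independent generators $\bgamma_1, \dots, \bgamma_p$ all lie in $\ker(\mbf{S}^T)$, they form a basis of $\ker(\mbf{S}^T)$, so the row space of $\mbf{N}$ is exactly $\ker(\mbf{S}^T)$. Because $\ker(\mbf{N})$ is the orthogonal complement of the row space of $\mbf{N}$, this gives $\ker(\mbf{N}) = \ker(\mbf{S}^T)^{\perp}$. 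Then $\mbf{y} \in \mbf{x} + \ker(\mbf{S}^T)^{\perp}$ iff $\mbf{y} - \mbf{x} \in \ker(\mbf{N})$ iff $\mbf{N}(\mbf{y} - \mbf{x}) = \mbf{0}$ iff $\mbf{N}\mbf{y} = \mbf{N}\mbf{x}$; intersecting with $\mathbb{R}^n_+$ yields (\ref{prop:scc_2}).

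The one point requiring genuine care — and the main obstacle — is the correct use of the hypothesis $p = n - \rank(\mbf{S})$. Without it the rows of $\mbf{N}$ would span only a proper subspace $V \subsetneq \ker(\mbf{S}^T)$, so that $\ker(\mbf{N}) = V^{\perp} \supsetneq \ker(\mbf{S}^T)^{\perp}$ and the set on the right-hand side of (\ref{prop:scc_2}) would be strictly larger than the SCC. Thus I must ensure that the conservation-law generators exhaust all of $\ker(\mbf{S}^T)$ as a linear space, and not merely generate the semi-positive cone; this is precisely what $p = n - \rank(\mbf{S})$ guarantees, as already recorded in the text. Everything else is routine finite-dimensional linear algebra.
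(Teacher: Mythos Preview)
Your argument is correct and follows essentially the same route as the paper: invoke $\spn(\mbf{S}) = \ker(\mbf{S}^T)^{\perp}$ for (\ref{prop:scc_1}), and use $p = n - \rank(\mbf{S})$ to conclude that the rows of $\mbf{N}$ form a basis of $\ker(\mbf{S}^T)$, whence $\ker(\mbf{N}) = \ker(\mbf{S}^T)^{\perp}$ and (\ref{prop:scc_2}) follows. Your write-up simply spells out the linear-algebra details that the paper leaves implicit.
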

\begin{proof}
Equation (\ref{prop:scc_1}) simply follows from Definition \ref{def:scc} by observing that $\spn(\mbf{S}) =  (\ker(\mbf{S}^T))^{\perp}$. Equation (\ref{prop:scc_2}) follows from the fact that, since $p = n - \rank(\mbf{S})$, the set $\left\{\bgamma_1, \dots, \bgamma_p \right\}$ is a basis for $\ker(\mbf{S}^T)$.
\end{proof}

\begin{definition}\label{def:gsa}
A CRN is said to satisfy the global stability condition if for every stoichiometric compatibility class there exists a unique globally asymptotically stable state $\mbf{x}_e$. 
\end{definition}

This means in particular that every trajectory with initial point on the given SCC tends asymptotically to the steady state $\mbf{x}_e$, which is also an equilibrium point. 
Details about asymptotic stability properties of CRNs can be found, e.g., in \citep{Chellaboina,Feinberg,Yu}.

\section{Mathematical model of Loss and Gain of Function mutations}
A mutation consists essentially in a permanent alteration in the nucleotide sequence of the genome of a cell. Mutations play a fundamental role in cancer evolution \citep{Stratton,Weinstein}. Here we are concerned with effects induced by mutations on species concentrations in the CRN. Specifically, we consider LoF and GoF mutations that are commonly observed in cancer cells \citep{Hochman, Lemieux, Levine_T, Levine}.
\\

LoF mutations result in reduction or abolishment of a protein function, which is simulated by a restriction on the value of the related density. The degree to which the function is lost can vary; for null mutations the function is completely lost and the concentration of the related molecules is supposed to vanish; for leaky mutations some function is conserved and the value of concentration is appropriately reduced \citep{Griffiths,Li}. In this study, we deal with null LoF mutations by referring directly to the concentrations of the mutated molecular species, and by assuming that they are set equal to zero.\\

GoF mutations are responsible for an enhanced activity of a specific protein, so that its effects become stronger \citep{Griffiths,Li}. In our framework a GoF is implemented by excluding from our CRN the reactions involved in the deactivation of the considered protein; this is achieved, by setting to zero the corresponding reaction rates.  \\

\subsection{Loss of Function}\label{sec:lof}
Consider a CRN described by system (\ref{eq:dot_x}) and consider a state $\mbf{x}$ of the system. A LoF mutation of the elemental species $A_j$ results in a projection of the state $\mbf{x}$ in a novel state where the concentrations of the $j-$th elemental species and of all its compounds are zero. Equivalently, also the total concentration $c_j$ available in the $j-$th conservation law is zero. This is modeled by applying the following operator to the state $\mbf{x}$.

\begin{definition}\label{def:lof_op}
Consider an elemented CRN and let $A_j$ be the $j-$th elemental species of the network. A LoF mutation of $A_j$ results in the operator $\mathcal{P}_{L_j} : \mathbb{R}^n \rightarrow \mathbb{R}^n$ that projects $\mbf{x}$ into a novel state $\mathcal{P}_{L_j}(\mbf{x}) = \left[\tilde{\mbf{x}}^T_1 ,  \tilde{\mbf{x}}^T_2\right]^T$ such that

\begin{displaymath}
\tilde{x}_{2,i} = 
\begin{cases}
0 & \text{if } \gamma_{ji} \neq 0 \\
x_{2, i} & \text{otherwise}
\end{cases}, \quad i=p+1, \dots, n
\end{displaymath}
and
\begin{displaymath}
\tilde{\mbf{x}}_1 = \left(\widetilde{\mbf{c}} - \mbf{N}_2 \tilde{\mbf{x}}_2 \right)
\end{displaymath}
where $\widetilde{\mbf{c}}$ is obtained by setting to 0 the $j-$th element of the vector $\mbf{c} = \mbf{N} \mbf{x}$.
\end{definition}

\begin{remark}
If $\mbf{x}$ is an ideal state for the CRN then $\mathcal{P}_{Lj}(\mbf{x})$ is obtained by setting to zero the concentration $x_j$ of the $j-$th elemental species. 
\end{remark}

\begin{remark}
$\mathcal{P}_{L_j}$ transforms SCCs into SCCs. Indeed, if the concentration vectors $\mbf{x}$ and $\mbf{y}$ belong to the same SCC, i. e.
\begin{displaymath}
\mbf{N} \mbf{x} = \mbf{N} \mbf{y} = \mbf{c}\ ,
\end{displaymath}
then $\mathcal{P}_{L_j} (\mbf{x})$ and  $\mathcal{P}_{L_j}(\mbf{y})$ still belong to the same SCC. More in details, 
\begin{displaymath}
\mbf{N} \mathcal{P}_{L_j}(\mbf{x}) = \mbf{N} \mathcal{P}_{L_j}(\mbf{y}) = \tilde{\mbf{c}} \ ,
\end{displaymath}
Where $\tilde{\mbf{c}}$ is equal to $ \mbf{c}$ except for the $j-$th element that is zero. 

Therefore the following theorem holds.
\end{remark}

\begin{theorem}\label{theo:Lof_res}
Consider an elemented CRN described by the system of ODEs (\ref{eq:dot_x}), and any states $\mbf{x}$, $\mbf{y}$ such that $\mathcal{SC}(\mbf{x}) = \mathcal{SC}(\mbf{y})$.  Then $\mathcal{SC}(\mathcal{P}_{L_j}(\mbf{x})) = \mathcal{SC}(\mathcal{P}_{L_j}(\mbf{y}))$. \\
If in addition the CRN satisfies the global stability condition then the trajectories starting from $\mathcal{P}_{L_j}(\mbf{x})$ and $\mathcal{P}_{L_j}(\mbf{y})$  lead to the same globally asymptotically stable state.
\end{theorem}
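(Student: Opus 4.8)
The plan is to reduce everything to the characterization of stoichiometric compatibility classes given in Proposition \ref{prop:sc_and_cl}. Since the standing assumption $p = n - \rank(\mbf{S})$ is in force, equation (\ref{prop:scc_2}) tells us that two states lie in the same SCC precisely when they share the same image under $\mbf{N}$; that is, $\mathcal{SC}(\mbf{x}) = \mathcal{SC}(\mbf{y})$ is equivalent to $\mbf{N}\mbf{x} = \mbf{N}\mbf{y}$. Consequently the first assertion is equivalent to the statement that $\mbf{N}\,\mathcal{P}_{L_j}(\mbf{x})$ depends on $\mbf{x}$ only through the vector $\mbf{c} = \mbf{N}\mbf{x}$. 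So the first step I would take is to compute $\mbf{N}\,\mathcal{P}_{L_j}(\mbf{x})$ explicitly using the block structure.

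For the key computation I would write $\mathcal{P}_{L_j}(\mbf{x}) = [\tilde{\mbf{x}}_1^T, \tilde{\mbf{x}}_2^T]^T$ as in Definition \ref{def:lof_op} and use the decomposition $\mbf{N} = [\mbf{I}_p, \mbf{N}_2]$ from (\ref{eq:M_dec}). Then
\begin{displaymath}
\mbf{N}\,\mathcal{P}_{L_j}(\mbf{x}) = \tilde{\mbf{x}}_1 + \mbf{N}_2\,\tilde{\mbf{x}}_2 = \left(\widetilde{\mbf{c}} - \mbf{N}_2\,\tilde{\mbf{x}}_2\right) + \mbf{N}_2\,\tilde{\mbf{x}}_2 = \widetilde{\mbf{c}},
\end{displaymath}
where $\widetilde{\mbf{c}}$ is $\mbf{c} = \mbf{N}\mbf{x}$ with its $j$-th entry replaced by zero. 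The crucial observation, and the conceptual heart of the argument, is that although the projected state $\mathcal{P}_{L_j}(\mbf{x})$ genuinely depends on the full vector $\mbf{x}$ (through the zeroing of the compound variables $\tilde{x}_{2,i}$ corresponding to $\gamma_{ji}\neq 0$), the term $\mbf{N}_2\,\tilde{\mbf{x}}_2$ cancels exactly, so that $\mbf{N}\,\mathcal{P}_{L_j}(\mbf{x}) = \widetilde{\mbf{c}}$ is a function of $\mbf{c}$ alone. Hence if $\mbf{N}\mbf{x} = \mbf{N}\mbf{y}$ then $\mbf{N}\,\mathcal{P}_{L_j}(\mbf{x}) = \mbf{N}\,\mathcal{P}_{L_j}(\mbf{y})$, and by (\ref{prop:scc_2}) the two projected states share the same SCC, which proves the first claim.

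For the second assertion I would bring in the dynamics. Equation (\ref{eq:int_x}) shows that every trajectory stays inside the SCC of its initial point, so each SCC is forward invariant. By the first part, $\mathcal{P}_{L_j}(\mbf{x})$ and $\mathcal{P}_{L_j}(\mbf{y})$ lie in one and the same SCC, and under the global stability condition (Definition \ref{def:gsa}) that class contains a unique globally asymptotically stable equilibrium $\mbf{x}_e$. Since both initial points belong to this class and every trajectory issued from the class converges to $\mbf{x}_e$, the two trajectories converge to the same state, as required.

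The computation is essentially routine; the only point demanding care is the cancellation displayed above, namely verifying that the action of the projection on the non-elemental variables leaves $\mbf{N}\,\mathcal{P}_{L_j}(\mbf{x})$ unchanged. Everything else is bookkeeping with the block decomposition together with a direct appeal to the already-established Proposition \ref{prop:sc_and_cl} and to the definition of the global stability condition. I would also note that the remark immediately preceding the theorem already records the identity $\mbf{N}\,\mathcal{P}_{L_j}(\mbf{x}) = \widetilde{\mbf{c}}$, so in practice the proof amounts to assembling that observation with the forward invariance of SCCs and the uniqueness of the globally asymptotically stable state.
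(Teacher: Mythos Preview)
Your proof is correct and follows essentially the same route as the paper: the paper's argument is contained in the remark immediately preceding the theorem, which records that $\mbf{N}\,\mathcal{P}_{L_j}(\mbf{x}) = \tilde{\mbf{c}}$ depends only on $\mbf{c} = \mbf{N}\mbf{x}$, and then simply states ``Therefore the following theorem holds.'' Your write-up makes the block-decomposition computation and the appeal to Proposition~\ref{prop:sc_and_cl} and Definition~\ref{def:gsa} more explicit, but the underlying idea is identical.
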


\begin{remark}
This holds in particular if $\mbf{x}$ and $\mbf{y}$ are replaced by the initial state $\mbf{x}_0$ and the corresponding steady state $\mbf{x}_e$, respectively. Thus the trajectories starting from $\mathcal{P}_{L_j}(\mbf{x}_0)$ and $\mathcal{P}_{L_j}(\mbf{x}_e)$  lead to the same mutated steady state.
\end{remark}

\subsection{Gain of Function}\label{sec:gof}
Consider a CRN described by system  (\ref{eq:dot_x}). In particular, let $\mbf{S}$ be the stoichiometric matrix of the system. A mutation resulting in the GoF of a given protein is implemented by removing from the network the reactions involved in the deactivation of such a protein. From a mathematical viewpoint this can be achieved by setting to zero the values of the corresponding rate constants, or equivalently by setting to zero the corresponding columns of $\mbf{S}$.  
\begin{definition}\label{def:gof_op}
Consider a CRN and let $\mbf{S}$ be the corresponding stoichiometric matrix. Given a set of reactions identified by the indices $H \subseteq \left\{1, \dots, r \right\}$ a GoF mutation results in the operator $\mathcal{G}_H : \mathbb{R}^{n \times r} \rightarrow \mathbb{R}^{n \times r}$ that projects $\mbf{S}$ into a novel stoichiometric matrix $\widetilde{\mbf{S}} = \mathcal{G}_H(\mbf{S})$ such that
\begin{displaymath}
\tilde{S}_{i, h} = 
\begin{cases}
0 & \text{if } h \in H \\
S_{i, h} & \text{otherwise}
\end{cases}, \quad i=1, \dots, n~,
\end{displaymath}
\end{definition}

\begin{remark}
$\mathcal{G}_H(\mbf{S})$ defines a new CRN where the chemical reactions in $H$ have been removed, while the set of chemical species, and thus the state space $\mathbb{R}^n$, are kept fixed.
\end{remark}

\begin{theorem}\label{theo:Gof_res}
If the set of reactions $H \subseteq \left\{1, \dots, r \right\}$ is such that 
\begin{equation}\label{eq:rank_cond}
    \rank \left( \mathcal{G}_H(\mbf{S}) \right) = \rank \left( \mbf{S} \right)
\end{equation}
then
\begin{equation}
    \ker \left( \mathcal{G}_H(\mbf{S})^T \right) = \ker \left( \mbf{S}^T \right)
\end{equation}
and thus in particular the stoichiometric matrices $ \mathcal{G}_H(\mbf{S})$ and $\mbf{S}$ define the same SCCs.
\end{theorem}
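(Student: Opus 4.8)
The plan is to prove the kernel identity by a two-sided argument: one inclusion is automatic from the column structure of $\mathcal{G}_H(\mbf{S})$, while the rank hypothesis (\ref{eq:rank_cond}) supplies exactly the dimensional information needed to upgrade that inclusion to an equality. The subsequent statement about SCCs then follows at once from Definition \ref{def:scc} and Proposition \ref{prop:sc_and_cl}.

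First I would establish the inclusion $\ker(\mbf{S}^T) \subseteq \ker(\mathcal{G}_H(\mbf{S})^T)$ without using the hypothesis. Writing $\widetilde{\mbf{S}} = \mathcal{G}_H(\mbf{S})$, each column of $\widetilde{\mbf{S}}$ is either a column of $\mbf{S}$ (for the indices not in $H$) or the zero vector (for the indices in $H$). Hence if $\bgamma \in \ker(\mbf{S}^T)$, so that $\bgamma$ is orthogonal to every column of $\mbf{S}$, then $\bgamma$ is a fortiori orthogonal to every column of $\widetilde{\mbf{S}}$, giving $\widetilde{\mbf{S}}^T \bgamma = \mbf{0}$ and therefore $\bgamma \in \ker(\widetilde{\mbf{S}}^T)$.

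Next I would compare dimensions. Since $\mbf{S}$ and $\widetilde{\mbf{S}}$ are both $n \times r$ matrices, the rank--nullity theorem applied to $\mbf{S}^T$ and $\widetilde{\mbf{S}}^T$ yields $\dim \ker(\mbf{S}^T) = n - \rank(\mbf{S})$ and $\dim \ker(\widetilde{\mbf{S}}^T) = n - \rank(\widetilde{\mbf{S}})$. The hypothesis (\ref{eq:rank_cond}) states $\rank(\widetilde{\mbf{S}}) = \rank(\mbf{S})$, so the two kernels have the same dimension. A subspace contained in another of the same finite dimension must coincide with it, whence $\ker(\mathcal{G}_H(\mbf{S})^T) = \ker(\mbf{S}^T)$, as required.

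Finally, for the statement on SCCs I would take orthogonal complements: from $\ker(\mathcal{G}_H(\mbf{S})^T) = \ker(\mbf{S}^T)$ and the identity $\spn(\mbf{A}) = (\ker(\mbf{A}^T))^{\perp}$ used in the proof of Proposition \ref{prop:sc_and_cl}, it follows that $\spn(\mathcal{G}_H(\mbf{S})) = \spn(\mbf{S})$. Substituting into Definition \ref{def:scc} shows that $\mathcal{SC}(\mbf{x})$ computed with $\mathcal{G}_H(\mbf{S})$ equals the one computed with $\mbf{S}$, for every $\mbf{x} \in \mathbb{R}^n_+$. I expect no genuine difficulty here; the only point requiring care is the logical role of the hypothesis, which is not needed for the inclusion but is precisely what forces equality of the kernels. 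One should also note that equality of the spans does not follow from the column-structure inclusion alone --- indeed $\spn(\widetilde{\mbf{S}}) \subseteq \spn(\mbf{S})$ is generally strict --- so the rank condition cannot be dispensed with.
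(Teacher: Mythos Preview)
Your proposal is correct and follows essentially the same approach as the paper's proof: establish the inclusion $\ker(\mbf{S}^T)\subseteq\ker(\mathcal{G}_H(\mbf{S})^T)$ from the column structure, then invoke the rank hypothesis to match dimensions and conclude equality. Your treatment is slightly more explicit (spelling out rank--nullity and the passage to SCCs via orthogonal complements), but the argument is the same.
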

\begin{proof}
From Definition \ref{def:gof_op} it follows that $\ker \left( \mbf{S}^T \right) \subseteq \ker \left( \mathcal{G}_H(\mbf{S})^T \right).$
Indeed, if $\bgamma \in \ker \left( \mbf{S}^T \right)$ then 
\begin{displaymath}
\mbf{0} = \bgamma^T \mbf{S} = \left( \bgamma^T \mbf{S_1}, \dots, \bgamma^T  \mbf{S_r} \right)
\end{displaymath}
where $\mbf{S_j}$, $j=1, \dots, r$, denotes the $j-$th column of $\mbf{S}$. In particular $\bgamma^T \mbf{S}_j = 0$ for all $j = 1, \dots, r$, $j \not\in H$, that is $\bgamma \in  \ker \left( \mathcal{G}_H(\mbf{S})^T \right)$.

Additionally, equation (\ref{eq:rank_cond}) implies that $\ker \left( \mathcal{G}_H(\mbf{S})^T \right)$ and $\ker \left( \mbf{S}^T \right)$ have the same dimension and thus are equal.
\end{proof}

\begin{corollary}\label{cor:gof}
Consider a CRN described by the system of ODEs (\ref{eq:dot_x}) and satisfying the global stability condition of Definition \ref{def:gsa}. Let $\mbf{x}$ and $\mbf{y}$ such that $\mathcal{SC}(\mbf{x}) = \mathcal{SC}(\mbf{y})$. If $H \subseteq \left\{1, \dots, r \right\}$ is such that equation (\ref{eq:rank_cond}) holds and the CRN having stoichiometric matrix $\mathcal{G}_H(\mbf{S})$ satisfies the stability condition then the mutated trajectories starting from $\mbf{x}$ and $\mbf{y}$ lead to the same steady state.
\end{corollary}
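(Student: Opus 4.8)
The plan is to derive this corollary directly from Theorem \ref{theo:Gof_res} together with the global stability condition of Definition \ref{def:gsa} applied to the mutated network. The essential point is that the GoF operator $\mathcal{G}_H$ preserves the stoichiometric compatibility class containing both $\mbf{x}$ and $\mbf{y}$, so that after the mutation the two initial states still share a common SCC; uniqueness of the globally asymptotically stable state on that class then forces both trajectories to the same limit.

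First I would invoke the hypothesis (\ref{eq:rank_cond}) and Theorem \ref{theo:Gof_res} to conclude that $\ker(\mathcal{G}_H(\mbf{S})^T) = \ker(\mbf{S}^T)$, and hence, taking orthogonal complements, $\spn(\mathcal{G}_H(\mbf{S})) = \spn(\mbf{S})$. By Definition \ref{def:scc}, this means the SCC of any state computed with respect to the mutated matrix $\mathcal{G}_H(\mbf{S})$ coincides with its SCC computed with respect to $\mbf{S}$. Writing $\widetilde{\mathcal{SC}}$ for the SCC of the mutated network, one then has $\widetilde{\mathcal{SC}}(\mbf{x}) = \mathcal{SC}(\mbf{x})$ and $\widetilde{\mathcal{SC}}(\mbf{y}) = \mathcal{SC}(\mbf{y})$; combined with the assumption $\mathcal{SC}(\mbf{x}) = \mathcal{SC}(\mbf{y})$, this yields $\widetilde{\mathcal{SC}}(\mbf{x}) = \widetilde{\mathcal{SC}}(\mbf{y})$.

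Next I would verify that the mutated trajectories remain confined to this common class. Applying the integrated form (\ref{eq:int_x}) to the mutated system $\dot{\mbf{x}} = \mathcal{G}_H(\mbf{S})\,\mbf{v}(\mbf{x},\mbf{k})$ shows that for a trajectory with initial datum $\mbf{x}$ the displacement $\mbf{x}(t)-\mbf{x}$ lies in $\spn(\mathcal{G}_H(\mbf{S}))$, so the whole trajectory stays inside $\widetilde{\mathcal{SC}}(\mbf{x})$, and likewise for $\mbf{y}$. Since the mutated CRN satisfies the global stability condition by hypothesis, there is a unique globally asymptotically stable state $\tilde{\mbf{x}}_e$ associated with the class $\widetilde{\mathcal{SC}}(\mbf{x}) = \widetilde{\mathcal{SC}}(\mbf{y})$, and every trajectory originating in this class converges to it. Because both mutated trajectories originate in this single class, they both tend to $\tilde{\mbf{x}}_e$, which is the desired conclusion.

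I do not anticipate a genuine obstacle, as the argument is essentially a chaining of the preceding results; the only point requiring care is the bookkeeping that the SCC must be computed with respect to the correct stoichiometric matrix at each step. Concretely, the equality of kernels furnished by Theorem \ref{theo:Gof_res} is exactly what allows the original-network hypothesis $\mathcal{SC}(\mbf{x}) = \mathcal{SC}(\mbf{y})$ to transfer to the mutated network, after which the global stability assumption on $\mathcal{G}_H(\mbf{S})$ can be applied without further work.
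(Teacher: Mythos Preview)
Your argument is correct and matches the paper's own reasoning: the paper states this result as an immediate corollary of Theorem~\ref{theo:Gof_res} without giving a separate proof, relying precisely on the fact that the mutated and original networks define the same SCCs, so that $\mathcal{SC}(\mbf{x})=\mathcal{SC}(\mbf{y})$ transfers to the mutated network and the global stability condition on $\mathcal{G}_H(\mbf{S})$ finishes the job. Your write-up simply makes these steps explicit.
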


\begin{remark}
This holds in particular if $\mbf{x}$ and $\mbf{y}$ are replaced by the initial state $\mbf{x}_0$ and the corresponding steady state $\mbf{x}_e$, respectively. When considering the CRN identified by $\mathcal{G}_H(\mbf{S})$ the trajectories starting from $\mbf{x}_0$ and $\mbf{x}_e$ lead to the same mutated steady state.
 \end{remark}

\subsection{Concatenation of mutation}

Many cancers arise by effect of a series of mutations accumulated in the cell over time. Here we generalize the results discussed in the previous sections to model the simultaneous action of multiple mutations on a given cell.

Consider for example a cell affected by two mutations resulting in the LoF of the elemental species $A_{j_1}$ and $A_{j_2}$. The most natural approach to quantify the combined effect of the two mutations is to start from a concentration vector $\mbf{x}_e$ modeling the (steady) state of a cell in physiological condition,  and then apply the procedure described in Section \ref{sec:lof} for each single mutation one after the other. Specifically, first we project $\mbf{x}_e$ through $\mathcal{P}_{L_{j_1}}$ and we compute the steady state of the trajectory started from $\mathcal{P}_{L_{j_1}}(\mbf{x}_e)$. Then we use $\mathcal{P}_{L_{j_2}}$ to project the novel, mutated steady state, and we compute the trajectory started from such a projection. The obtained trajectory will belong to the SCC
\begin{equation}\label{eq:aux_comb}
\left\{\mbf{y} \in \mathbb{R}_+^n \text{ s.t. } \mbf{N}\mbf{y} = \tilde{\mbf{c}} \right\}
\end{equation}
where $\tilde{\mbf{c}}$ has been obtained by setting to zero the $j_1-$th and the $j_2-$th element of $\mbf{c} := \mbf{N}\mbf{x}_e$.

The same result could have been obtained by reversing the order of the mutations. First we project $\mbf{x}_e$ through $\mathcal{P}_{j_2}$, then we apply $\mathcal{P}_{j_1}$ to the corresponding steady state. The resulting trajectory will still belong to the SCC described by equation (\ref{eq:aux_comb}) and thus will lead to the same steady state obtained in the previous case, provided that the CRN satisfy the global stability condition.

This result easily generalizes to an arbitrary set of combined GoF and LoF mutations through the following definition. 

\begin{definition}\label{def:comb_mut}
Consider an elemented CRN described by the system (\ref{eq:dot_x}) with stoichiometric matrix $\mbf{S}$. Consider a set of $\ell$ mutations, resulting in the LoF of the elemental species $A_{j_1}, \dots, A_{j_\ell}$, and a set of $q$ GoF mutations, resulting in the suppression of the sets of reactions $H_1, \dots, H_q \subseteq \left\{1, \dots, r \right\}$.

The combined effect of the considered mutations is quantified by computing the asymptotically stable state of the system of ODEs (\ref{eq:dot_x}) with stoichiometric matrix
\begin{equation}\label{def:comb_gof}
\mathcal{G}_{H_{q}} \circ \dots \circ \mathcal{G}_{H_1}(\mbf{S})
\end{equation}
and initial condition 
\begin{equation}\label{def:comb_lof}
\mbf{x}(0) = \mathcal{P}_{L_{j_{\ell}}} \circ \dots \circ \mathcal{P}_{L_{j_1}}(\mbf{x}_e)
\end{equation}
where $\circ$ denotes function composition and $\mbf{x}_e$ are the values of species concentration reached by the cell in  physiological condition.
\end{definition}

\begin{theorem}\label{theo:com_mut}
The steady state computed through the procedure described in Definition \ref{def:comb_mut} does not depend on the order of the composition in the equations (\ref{def:comb_gof}) and (\ref{def:comb_lof}) provided that 
\begin{equation}\label{theo:cond_comb}
    \rank \left( \mathcal{G}_{H_{q}} \circ \dots \circ \mathcal{G}_{H_1}(\mbf{S}) \right) = \rank(\mbf{S})
\end{equation}
and that all the involved CRNs satisfy the global stability condition of Definition \ref{def:gsa}.
\end{theorem}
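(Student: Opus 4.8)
The plan is to reduce the claim to two independent order-invariance statements — one for the stoichiometric matrix produced by the GoF operators, and one for the stoichiometric compatibility class of the initial condition produced by the LoF operators — and then to invoke the global stability condition. The guiding principle is that, under the global stability condition, the asymptotically stable state is a function only of (a) the dynamics, i.e. the mutated stoichiometric matrix, and (b) the SCC containing the initial condition. Hence it suffices to show that neither of these two data depends on the chosen orderings.

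For the GoF part, I would first observe that each operator $\mathcal{G}_{H_i}$ merely sets to zero the columns indexed by $H_i$, so the composition in (\ref{def:comb_gof}) sets to zero all columns in $H_1 \cup \dots \cup H_q$. Being determined by a union of index sets, it is manifestly independent of the order of composition; in fact $\mathcal{G}_{H_q} \circ \dots \circ \mathcal{G}_{H_1} = \mathcal{G}_{H_1 \cup \dots \cup H_q}$. Denote the resulting matrix $\widetilde{\mbf{S}}$. Next, the rank hypothesis (\ref{theo:cond_comb}), combined with Theorem \ref{theo:Gof_res} applied with $H = H_1 \cup \dots \cup H_q$, yields $\ker(\widetilde{\mbf{S}}^T) = \ker(\mbf{S}^T)$, so that $\mbf{N}$ remains a valid conservation matrix for the mutated network and the two networks share the same SCCs. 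This step is what guarantees that the conservation matrix $\mbf{N}$ entering the definition of each $\mathcal{P}_{L_j}$ is left unchanged by the GoF mutations, which is what makes the next step meaningful.

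For the LoF part, I would invoke the remark preceding Theorem \ref{theo:Lof_res}, namely that $\mbf{N}\,\mathcal{P}_{L_j}(\mbf{x})$ equals $\mbf{N}\mbf{x}$ with its $j$-th component set to zero. Applying this iteratively to $\mbf{x}_e$ along the composition (\ref{def:comb_lof}), the vector $\mbf{N}\bigl(\mathcal{P}_{L_{j_\ell}} \circ \dots \circ \mathcal{P}_{L_{j_1}}(\mbf{x}_e)\bigr)$ is obtained from $\mbf{c} := \mbf{N}\mbf{x}_e$ by zeroing precisely the components $j_1, \dots, j_\ell$. Since this operation is symmetric in the indices $j_1, \dots, j_\ell$, the resulting vector $\tilde{\mbf{c}}$, and hence the SCC $\{\mbf{y} \in \mathbb{R}^n_+ : \mbf{N}\mbf{y} = \tilde{\mbf{c}}\}$ of the initial condition, is independent of the order of the LoF composition — even though the specific projected state may itself differ between orderings.

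Finally I would combine the two parts: by the global stability condition assumed for the mutated CRN with matrix $\widetilde{\mbf{S}}$, there is a unique globally asymptotically stable state on each SCC, so the steady state reached from the initial condition (\ref{def:comb_lof}) depends only on $\widetilde{\mbf{S}}$ and on the SCC of that initial condition. Both are order-invariant by the previous two paragraphs, which establishes the claim. The main obstacle — and the reason the rank condition is indispensable — is the interface between the two operators: the projections $\mathcal{P}_{L_j}$ are defined through $\mbf{N}$, so without the guarantee $\ker(\widetilde{\mbf{S}}^T) = \ker(\mbf{S}^T)$ the LoF step performed on the mutated network would not coincide with the one performed on the original network, and the clean separation between ``dynamics fixed by GoF'' and ``SCC fixed by LoF'' would break down.
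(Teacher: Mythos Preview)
Your argument is correct and follows essentially the same route as the paper: you show that the GoF composition reduces to zeroing the columns in $H_1\cup\dots\cup H_q$ and hence is order-independent, that under the rank hypothesis the SCCs are unchanged, that the LoF composition yields an initial point whose $\mbf{N}$-image is $\mbf{c}$ with entries $j_1,\dots,j_\ell$ zeroed (hence an order-independent SCC), and then conclude by global stability. Your closing remark on why the rank condition is needed to keep $\mbf{N}$ valid for the mutated network is a helpful addition but does not change the structure of the proof.
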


\begin{proof}
The thesis follows by observing that the equations (\ref{def:comb_gof}) and (\ref{def:comb_lof}) do not depend on the order of the composition. Indeed the mutated stoichiometric matrix $\mathcal{G}_{H_{q}} \circ \dots \circ \mathcal{G}_{H_1}(\mbf{S})$ is defined by setting to zero the columns of $\mbf{S}$ corresponding to the reactions in $H_1 \cup \dots \cup H_q$ that clearly does not depend on the order in which the GoF mutations are considered. Additionally if condition (\ref{theo:cond_comb}) holds then $\mathcal{G}_{H_{q}} \circ \dots \circ \mathcal{G}_{H_1}(\mbf{S})$ and $\mbf{S}$ define the same SCCs as shown in Theorem \ref{theo:Gof_res}. The initial condition (\ref{def:comb_lof}) defines the unique SCC 
\begin{displaymath}
\mathcal{SC}(\mathcal{P}_{L_{j_{\ell}}} \circ \dots \circ \mathcal{P}_{L_{j_1}}(\mbf{x}_e)) = \left\{\mbf{x}\in \mathbb{R}^n_+ \text{ s.t. } \mbf{N}\mbf{x} = \tilde{\mbf{c}} \right\}
\end{displaymath}
where $\tilde{\mbf{c}}$ has been obtained from $\mbf{c} := \mbf{N}\mbf{x}_e$ by setting to zero the $j_i-$th element, for all $i = 1, \dots \ell$. A different order of the LoF mutations results in a different initial concentration vector on the same SCC. Therefore, according to Theorem \ref{theo:Lof_res} it leads to the same mutated asymptotically steady state.
\end{proof}

\begin{remark}
Very recent experimental studies on cancer cells have shown the importance of the order of mutations on the development of cancer diseases \citep{Levine}. This seems to be in contrast with the results stated in Theorem \ref{theo:com_mut}. However, that theorem follows from a model that does not account for the selection process induced on the mutated cell by the external environment. Instead, our model refers to the specific G1-S transition of a cancerous cell and it mimics the effects of accumulated mutations just for that phase.   
\end{remark}

\begin{remark}
Coherently to the previous Remark, the process described in Definition \ref{def:comb_mut} is equivalent to apply each considered mutation individually one after the other. If the hypotheses of Theorem \ref{theo:com_mut} hold, changing the order of the mutation affects the covered trajectory but will lead to the same final steady state. 
\end{remark}

\section{Application to the colorectal cancer cells}

\subsection{Generalities}

In this section we apply the previous procedures to the analysis of a CRN  which has been devised to provide a simplified description of how signals carried by a ligand from outside the cell are processed in order to determine the behavior of a CRC cell. The kinetic model applies to a healthy cell entering the G1-S phase of its development, when a synthesis of new DNA is carried out, as a first step in the process of cell division \citep{Tortolina}. Next we introduce a few mutations that transform the healthy cell into a cancer cell, and we analyze the related variations in the equilibrium concentrations.

We refer to the reaction network for CRC cells as CRC-CRN; 
its most relevant features are summarized below; the full list of chemical species together with their abbreviated names, the set of chemical reactions, the system of ODEs, the fixed values of the rate constants, and the initial concentrations of the elemental variables in an ideal physiologic state can be found in \citep{Tortolina}, Supplementary Materials.

CRC-CRN models a specific process of signal transfer in colorectal cells.
It is activated by external signals represented by  TGF$\beta$, WNT and EGF growth factors. Following activation, a cascade of chemical reactions proceeds. The relevant output of the system is given by the concentrations of a number of transcription factors (activators or repressors). CRC-CRN involves 8 constant, and 411 varying chemical species. The constant species model non-consumable chemicals: three constants describe the growth factors. Internal interactions are modeled by 339 reversible and 172 irreversible reactions, for a total of $r = 850$ reactions. Chemical kinetics is based on mass action law, so that 850 rate constants are considered. The state of CRC-CRN is described by $n = 419$ variables resulting from the concentrations of the 411 internal species, plus 8 additional variables, of null time derivative, accounting for the constant inflows. The state variables  satisfy a system of 419 polynomial ODEs as in equation (\ref{eq:dot_x}). In particular the monomials defining the reaction fluxes $\mbf{v}(\mbf{x}, \mbf{k})$ are quadratic, since the reactants considered depend at most on two chemical species. The rate constants enter the system as real and positive parameters.

\subsection{Conservation laws and elemental variables of the CRC-CRN}

We have obtained a basis of $\ker(\mbf{S}^T)$ consisting of $p=81$ semi-positive conservation vectors, providing an independent set of 81 CLs, all of them being regarded as moiety CLs. In particular, we point out that CRC-CRN satisfies the condition $p = n - \rank(\mbf{S})$. As expected, 8 CLs correspond to constancy of original non-consumable chemical species, so that they can be considered as trivial. The remaining 73 CLs describe effective properties of the system. 

It is seen by inspection that the CRC-CRN is weakly elemented, with 9 chemical species not involved in the conservation laws. Up to the 8 constant species, the list of elemental species coincides with that of the basic species of \citep{Tortolina}, which were defined as consisting of proteins in free form, which may bind to other species in order to form derived compounds or secondary species.

A simple, rather typical, example of conservation law within our CRC-CRN is 
\begin{equation} \label{eq:CL_44}
\text{NLK} + \text{NLKP} + \text{NLKP$_-$TCFLEF}+
 \text{NLKP$_-$Pase10} +  \text{TAKP$_-$TAB$_-$NLK}= c
 \end{equation}
where $c$ is a given constant;  here NLK denotes the concentration of the elemental species ``Nemo like kinase'' \citep{Tortolina}; NLPP is the phosphorylated form of NLK, while an expression like NLKP$_-$TCFLEF refers to the compound formed by NLKP and TCFLEF. Equation (\ref{eq:CL_44}) expresses conservation in time of the total number of molecules of NLK belonging to the compounds entering the combination in the left side, as discussed in \S \ref{subsec:CL}. In words, the NLK molecules are transferred between the metabolites involved in (\ref{eq:CL_44}), but are not synthesised, degraded or exchanged with the environment  \citep{De_Martino,Haraldsdottir}. The constant $c$, which is determined by the initial conditions, may be regarded as a counter of the conserved molecules of NLK. 

 \subsection{Global stability of CRC-CRN}
 
 There are a number of results available for equilibrium points and their stability properties \citep{Chellaboina,Domijan,Feinberg_ARMA,Yu}, but they cannot be applied straightforwardly to this CRC-CRN, essentially for two reasons. In fact, some of these results require very technical hypotheses that cannot be verified in the case of our ODEs system, due to its dimension and complexity. Further, other results cannot be applied in the case of open systems, like the one considered in this paper. Therefore, we will make use of numerical simulations to support the following conjecture.

\begin{conj}
CRC-CRN satisfies the global stability condition introduced in Definition \ref{def:gsa}.
\end{conj}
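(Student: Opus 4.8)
The plan is to situate the CRC-CRN inside one of the structural classes of mass-action systems for which global convergence is guaranteed by Chemical Reaction Network Theory, rather than to confront the $419$-dimensional polynomial system directly. A natural first move would be to exploit the conservation laws to compactify each stoichiometric compatibility class; however, the network is only \emph{weakly} elemented, with $9$ species appearing in no conservation law, so that no strictly positive vector lies in $\ker(\mbf{S}^T)$ and the system fails to be conservative. Consequently the classes $\mathcal{SC}(\mbf{x})$ are in general unbounded, and boundedness of trajectories is not automatic. I would therefore begin by establishing \emph{dissipativity}: using the degradation and outflow reactions that make the system open, one shows that the $9$ conservation-law-free concentrations remain bounded along forward trajectories. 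Granting boundedness together with the forward invariance of each class, existence of at least one equilibrium per class follows from a Brouwer-type argument applied to a compact absorbing region.

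For uniqueness and local asymptotic stability I would invoke the deficiency-based machinery of Feinberg and Horn--Jackson. Concretely, one rewrites the boundary fluxes as reactions involving the zero complex, computes the number of complexes $n_c$, the number of linkage classes $\ell$, and the deficiency $\delta = n_c - \ell - \rank(\mbf{S})$, and tests for weak reversibility. Should the network prove weakly reversible with $\delta = 0$, the Deficiency Zero Theorem yields a unique positive equilibrium $\mbf{x}_e$ in the interior of each class, and the pseudo-Helmholtz function
\begin{displaymath}
V(\mbf{x}) = \sum_{i=1}^{n} \left( x_i \ln \frac{x_i}{x_{e,i}} - x_i + x_{e,i} \right)
\end{displaymath}
is a strict Lyapunov function on the open orthant, giving local stability. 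If weak reversibility or $\delta = 0$ fail, uniqueness could still be pursued through the Craciun--Feinberg injectivity criterion, applied to the reduced vector field obtained by substituting (\ref{eq:rho_2}) for the elemental variables, by verifying that the associated Jacobian never changes sign on the positive orthant.

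The passage from local to \emph{global} stability is where I expect the real obstacle to lie. The function $V$ controls the dynamics only in the interior of $\mathbb{R}^n_+$, so one must exclude $\omega$-limit points on the boundary; this is the content of the Global Attractor Conjecture, proven for complex-balanced systems but only through subtle persistence arguments. Verifying persistence --- that no species concentration decays to zero along a bounded trajectory --- for a network with $n = 419$ and $r = 850$, typically via the combinatorics of critical siphons or Petri-net structural conditions, is the step I anticipate to be genuinely hard. Compounding this, the open boundary reactions may well destroy weak reversibility or push the deficiency above zero, in which case no single theorem applies and the analysis becomes irreducibly network-specific. Because checking weak reversibility, complex balancing, persistence, and even dissipativity at this scale is a large computational undertaking with no guaranteed favorable outcome, an unconditional analytic proof appears out of reach --- which is exactly why the statement is offered as a conjecture supported by numerical evidence.
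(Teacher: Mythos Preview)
Your proposal and the paper agree on the bottom line --- an analytic proof is not obtained and the statement remains a conjecture --- but the routes taken to that conclusion are entirely different. You sketch a theoretical programme (dissipativity, deficiency/complex-balancing, the pseudo-Helmholtz Lyapunov function, persistence/siphon analysis) and then argue that each step is likely to fail or be unverifiable at the scale $n=419$, $r=850$. The paper makes no such attempt: it states up front that the available structural theorems either require hypotheses that cannot be checked for a system of this size or do not apply to open networks, and then replaces proof by a \emph{numerical verification}. Concretely, the authors draw five random stoichiometric compatibility classes by sampling the conserved totals $\mbf{c}$, generate for each class thirty random feasible initial states $\mbf{x}_0^{(k)}$ (via a randomized filling procedure that respects the conservation bounds and assigns the elemental species last), integrate the ODEs with \texttt{ode15s} over $[0,\,2.5\cdot 10^7]$, and then compare the endpoints through the species-wise coefficient of variation $\varepsilon_e(i)$. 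The observed drop of at least an order of magnitude from $\varepsilon_0$ to $\varepsilon_e$ is offered as empirical support for global stability.

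So there is no gap to name in the usual sense: your write-up is a correct assessment of why a rigorous proof is out of reach, but it is not what the paper does. What you would gain by following the paper is an explicit, reproducible computational protocol that actually furnishes evidence for the conjecture on the specific CRC-CRN; what your discussion buys is a clearer articulation of \emph{which} structural hypotheses (weak reversibility, deficiency zero, conservativity, persistence) fail and why, something the paper only gestures at. If you want your submission to match the paper, replace the CRNT sketch with the sampling-and-integration experiment described above.
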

\textit{Numerical verification.}  We defined 5 different SCCs by randomly selecting the values of the total concentrations $\mbf{c}$. Specifically, each element $c_j$, $j = 1, \dots, p$,  has been drawn from a $\log_{10}$-uniform distribution on $[10^{-2}, 10^3]$. For each of the obtained SCC we generated 30 initial points  
\begin{displaymath}
\mbf{x}^{(k)}_0  \in \left\{\mbf{y} \in \mathbb{R}^n_+\ \text{s.t.}\ \mbf{N}\mbf{y} = \mbf{c} \right\}, \ k = 1, \dots, 30\, 
\end{displaymath}
as follows. 

\begin{itemize}
    \item First, we dealt with the species that do not belong to any conservation law. The value of their initial concentration were $\log_{10}-$uniformly drawn from the interval $[10^{-5}, 10^5]$.
    \item Then we considered all the other chemical species but the elemental species. After randomly permuting their order, for each species $i$:
\begin{enumerate}
    \item we randomly selected an initial concentration value below the upperbound imposed by the total concentrations available in the conservation laws involving it. More in details we set
    \begin{displaymath}
     x^{(k)}_{0, i} = u  \cdot \min_{j \in \Gamma(i)}{\frac{c_j}{\gamma_{ji}}} \ ,
     \end{displaymath}
     where $u$ was uniformly drawn from $[0, 1]$ and 
     \begin{displaymath}
    \Gamma(i) = \left\{ j \in \{1, \dots, p\}\ \text{s.t.}\ \gamma_{ji} \neq 0 \right\}.
    \end{displaymath}
    is the set of all the conservation laws involving the $i-$th species.
    \item we updated the total concentration available in each conservation law by removing the amount already filled by the $i-$th species, i.e.
    \begin{displaymath}
     c_j \leftarrow c_j - \gamma_{ji} \ x^{(k)}_{0, i} 
     \end{displaymath}
     for all $j \in \Gamma(i)$.
\end{enumerate}
\item Finally we considered the elemental species. By exploiting the fact that each elemental species belongs to only one conservation law, $j$, we set 
\begin{displaymath}
x^{(k)}_{0, i} = c_j 
\end{displaymath}
where $c_j$ is the value of the total concentration still available after the previous step.
\end{itemize}

For each of the 30 points generated with the described procedure, we used the matlab tool \texttt{ode15s} \citep{Shampine} to integrate the system of ODEs (\ref{eq:dot_x}) on the interval $[0, 2.5 \cdot 10^7]$, with initial condition $\mbf{x}(0) = \mbf{x}_0^{(k)}$. The value of the solution at the last time--point was considered as the corresponding asymptotic steady state $\mbf{x}_e^{(k)}$.

Since the initial values $\mbf{x}_0^{(k)}$ all belong to the same SCC they should lead to the same steady state. This was verified by computing for each species the coefficient of variation of the equilibrium values across the 30 trajectories. Namely, for each species we computed
\begin{equation}
    \varepsilon_e(i) = \frac{\sqrt{\frac{1}{29} \sum_{k=1}^{30} ( x_{e, i}^{(k)} - \mu_e(i) )^2 }}{\mu_e(i)} 
\end{equation}
where $\mu_e(i) = \frac{1}{30}\sum_{k=1}^{30} x_{e, i}^{(k)}$.

As a comparison, for each species we also computed the coefficient of variation $\varepsilon_0(i)$ across the initial values $x_{0, i}^{(k)}$, $k = 1, \dots, 30$.

Figure \ref{fig:uni} shows the averaged coefficient of variations obtained with the 5 considered SCCs. Fixed a SCC, the coefficient of variation across the initial values $\varepsilon_0$ is always around 2.5, while the coefficient of variation across the steady states $\varepsilon_e$ is at least one order of magnitude lower. Moreover, the latter shows higher differences across the SCCs. This is mainly due to the fact that depending on the SCC, few species may require a longer time to reach the asymptotically stable state and thus may show a higher variation when the system of ODEs is integrated in the fixed time-interval  $[0, 2.5 \cdot 10^7]$.

\begin{figure*}
  \includegraphics[width=0.9\textwidth]{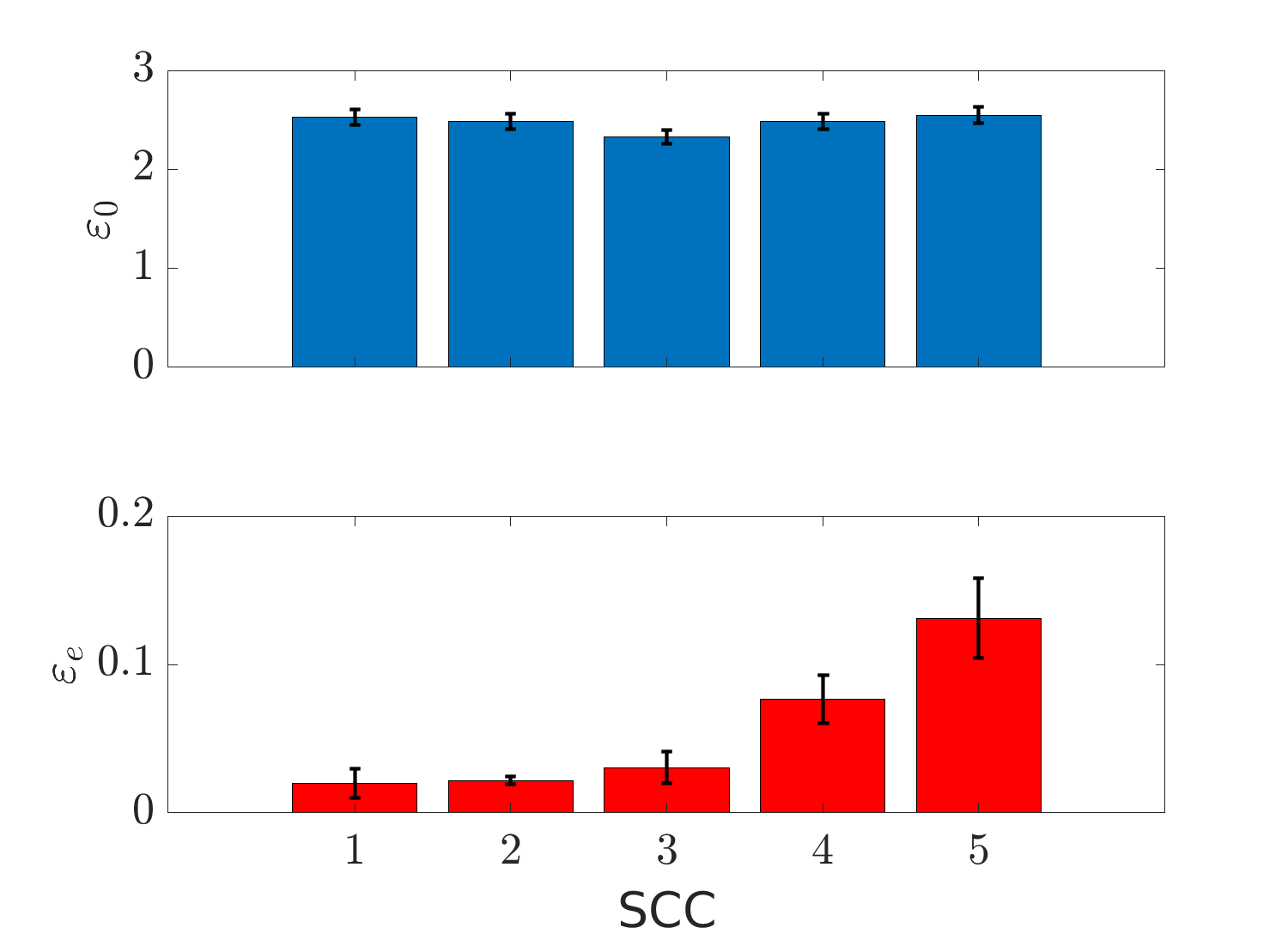}
\caption{Average and standard error of the mean over the non--constant species of the coefficient of variation across 30 different initial conditions (upper panel) and the corresponding asymptotically stable states (lower panel). Each bar corresponds to the results obtained with a different SCC. Note the different scale on the $y$ axes.}
\label{fig:uni}
\end{figure*}

\subsection{LoF mutation of \text{TBRII}}

We investigated the impact on CRC-CNR of a mutation resulting in the LoF of the elemental species \text{TBRII}. 

First, we computed the value of the concentration vector modeling the physiological steady state of the cell prior to mitosis. To this end, we integrated the system of ODEs (\ref{eq:dot_x}) on the interval $[0, 2.5 \cdot 10^7]$, with initial condition $\mbf{x}(0) = \mbf{x}_0$, where $\mbf{x}_0$ is the ideal state defined by setting the initial concentrations of the elemental species as in \cite{Tortolina}. The value of the steady state in the physiological cell was defined as the value of the solution at the last time--point.

By using Definition \ref{def:lof_op}, we then defined the operator $\mathcal{P}_{L_j}$ associated to the LoF of \text{TBRII}. The steady state of the mutated cell was computed by integrating the system of ODEs (\ref{eq:dot_x}) with two different initial conditions, namely $\mbf{x}(0) = \mathcal{P}_{L_j}(\mbf{x}_0)$ and $\mbf{x}(0) = \mathcal{P}_{L_j}(\mbf{x}_e)$. As in the previous step, in both cases we solved the system on the interval $[0, 2.5 \cdot 10^7]$ and we defined as steady state the value of the solution computed at the last time--point.

According to Theorem \ref{theo:Lof_res} the two trajectories, starting from $\mathcal{P}_{L_j}(\mbf{x}_0)$ and $\mathcal{P}_{L_j}(\mbf{x}_e)$, should lead to the same steady state. This result was numerically verified by computing for both the trajectories the relative difference
\begin{equation}
    \delta_i = \frac{x^m_{e, i} - x_{e, i}}{x_{e, i}}\ , \quad i = 1, \dots, n
\end{equation}
where $\mbf{x}_e^m$ is the steady state in the mutated cell.

As shown in Figure \ref{fig:lof_eq}, the results obtained with the two different initial conditions coincide and thus Theorem  \ref{theo:Lof_res} is verified. Moreover, Figure \ref{fig:lof_eq} shows the effect of the LoF of \text{TBRII} on the concentrations of all the chemical species. Indeed, $\delta_i < 0$ means that the value of the concentration of the $i-$th species in the mutated cell is lower than in the healthy cell. On the contrary $\delta_i > 0$ means that the amount of the $i-$th species is higher in the mutated cell.

On the other hand, different values of the initial conditions lead to different trajectories. In Figure \ref{fig:lof_speed} we compared the time required by the two trajectories to reach a stable state. To this end for each time point $t$ we computed 
\begin{equation}
    ||\dot{\mbf{x}}(t) ||_{\infty} = ||\mbf{S} \mbf{v}(\mbf{x}(t), \mbf{k}) ||_{\infty} \ .
\end{equation}
Since $\mbf{x}_e$ is already a steady state for the CRC-CRN, when computing the projected value $\mathcal{P}_{L_j}(\mbf{x}_e)$ the species not affected by the LoF of \text{TBRII} maintain their stable values. As a consequence, the trajectory starting from $\mathcal{P}_{L_j}(\mbf{x}_e)$ requires a lower number of iterations to reach a value of $||\dot{\mbf{x}}(t) ||_{\infty} $ closer to 0.

\begin{figure*}
  \includegraphics[width=0.9\textwidth]{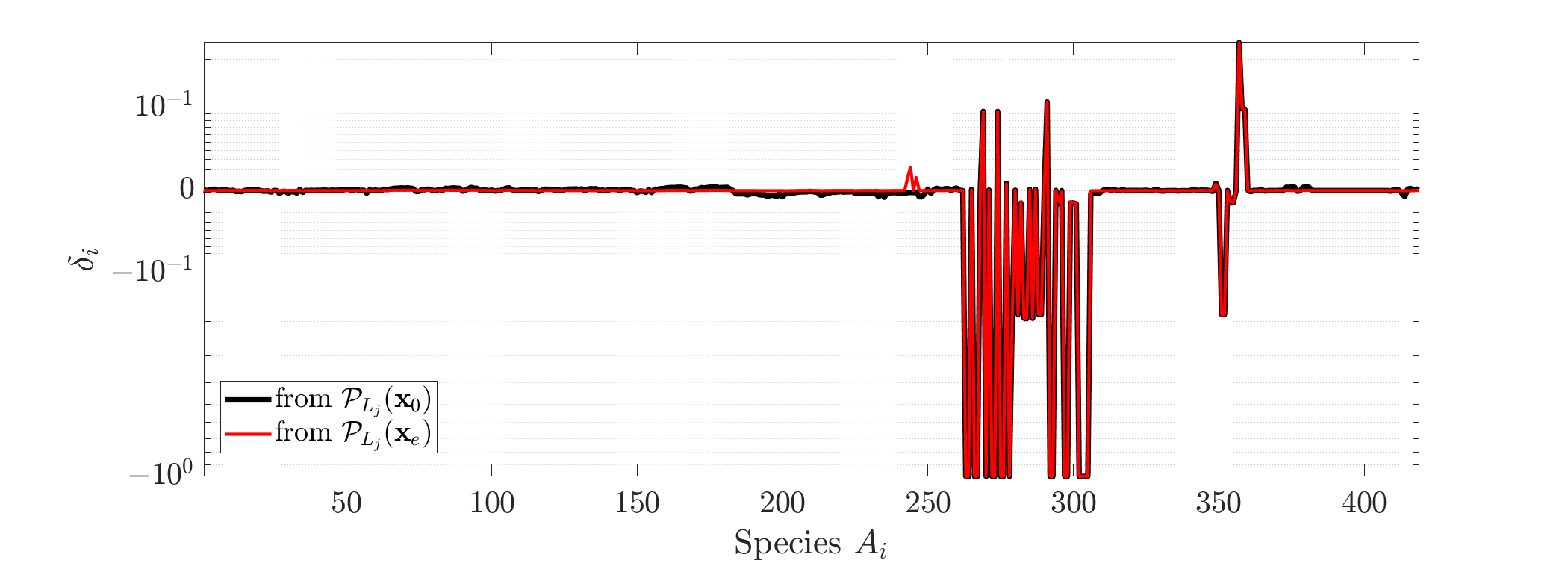}
\caption{Value of the relative difference $\delta_i$ between the steady states in the physiological cell and in the cell affected by LoF mutation of \text{TBRII}. Black and red lines are obtained when the system of ODEs for the mutated cell is solved with initial condition $\mbf{x}(0) = \mathcal{P}_{L_j}(\mbf{x}_0)$ and $\mbf{x}(0) = \mathcal{P}_{L_j}(\mbf{x}_e)$, respectively.}
\label{fig:lof_eq}
\end{figure*}

\begin{figure*}
  \includegraphics[width=0.9\textwidth]{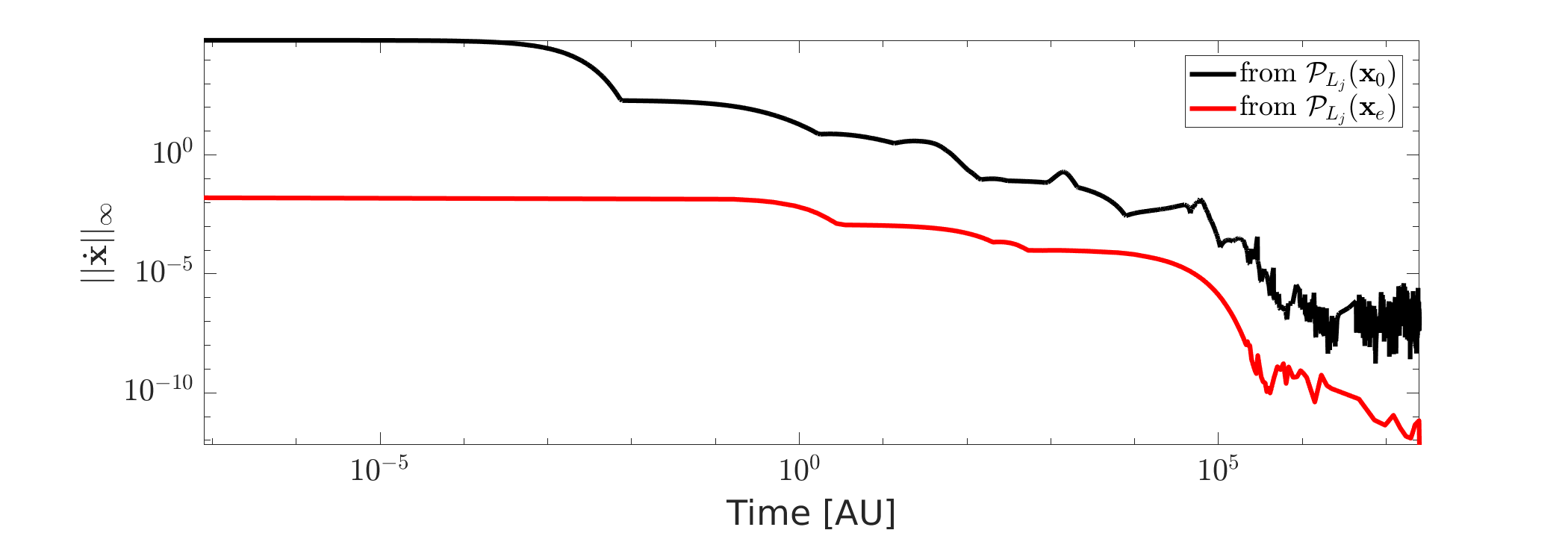}
\caption{Infinity norm of the derivative $\dot{\mbf{x}}$ of the concentration vector as function of time. Black and red lines show the results obtained by solving the ODEs system (\ref{eq:dot_x}) with initial condition $\mbf{x}(0) = \mathcal{P}_{L_j}(\mbf{x}_0)$ and $\mbf{x}(0) = \mathcal{P}_{L_j}(\mbf{x}_e)$, respectively, $\mathcal{P}_{L_j}$ being the operator associate to the LoF mutation of \text{TBRII}}
\label{fig:lof_speed}
\end{figure*}

\subsection{GoF mutation of \text{BRAF}}
We investigated the impact on the CRC-CNR of a mutation resulting in the GoF of the elemental species \text{BRAF}.

To this end, following Definition \ref{def:gof_op}, we defined the operator $\mathcal{G}_H$, where the $H$ is defined to include all the reactions involved in the deactivation of \text{BRAF}$^*$. We recall that \text{BRAF}$^*$ is the activated form of \text{BRAF}, consisting in the phosphorylation of a specific amino acid. Thus it is assumed that the mutated form of \text{BRAF} is still subject to activation in \text{BRAF}$^*$, while inactivation of \text{BRAF}$^*$ is blocked.

As described in \cite{Tortolina}, Supplementary materials, the deactivation of \text{BRAF}$^*$  is regulated by the phosphatase \text{Pase1} through the following set of reactions:
\[  \text{BRAF}^* + \text{Pase1}  \; \overset{k_{1f}}{\underset{k_{1r}}\rightleftarrows}    \; \text{BRAF}^* _-\text{Pase1}
\]  
\[  \text{BRAF}^* _-\text{Pase1}  \;  \overset{k_2}\rightarrow \;   \text{BRAF} + \text{Pase1} \]
where a simplified notation has been adopted for the rate constants.
To block such a deactivation process while respecting the condition described by equation (\ref{eq:rank_cond}) in Theorem \ref{theo:Gof_res} we removed the two forward reactions 
\[  \text{BRAF}^* + \text{Pase1}  \; \overset{k_{1f}}\rightarrow    \; \text{BRAF}^* _-\text{Pase1}
 \;  \overset{k_2}\rightarrow \;   \text{BRAF} + \text{Pase1} , \]
that is we defined a novel stoichiometric matrix $\mathcal{G}_H(\mbf{S})$ were the corresponding column of $\mbf{S}$ have been set equal to 0.

Similarly to what we have done in the previous section, we computed the asymptotically stable states of the trajectories obtained solving the system of ODEs defined by the stoichiometric matrix $\mathcal{G}_H(\mbf{S})$ with two different initial condition, namely $\mbf{x}(0) = \mbf{x}_0$ and $\mbf{x}(0) = \mbf{x}_e$,  $\mbf{x}_0$ and $\mbf{x}_e$ being the ideal and steady state for the physiological cell.

Figure \ref{fig:gof_eq} shows that the two trajectories lead to the same steady state as stated in Corollary \ref{cor:gof}. Additionally, as done in the previous section, in Figure \ref{fig:gof_speed} we compare the time required by the two trajectories to reach the stable state. According to Definition \ref{def:gof_op}, the GoF of \text{BRAF} is implemented by setting to zero a set of columns of the stoichiometric matrix $\mathbf{S}$, and thus modifying the system of ODEs associated to the CRC-CRN. Although $\mbf{x}_e$ was a stable state for the system of ODEs modeling the cell in physiological condition, for most of the species the corresponding concentration $x_{e, i}$ is no longer an equilibrium value for the new system. For this reason the two trajectories, starting from $\mbf{x}_e$ and $\mbf{x}_0$, show similar behaviors of $||\dot{\mbf{x}}(t) ||_{\infty} $ as function of the time $t$ in which the solution of the system is computed.


\begin{figure*}
  \includegraphics[width=0.9\textwidth]{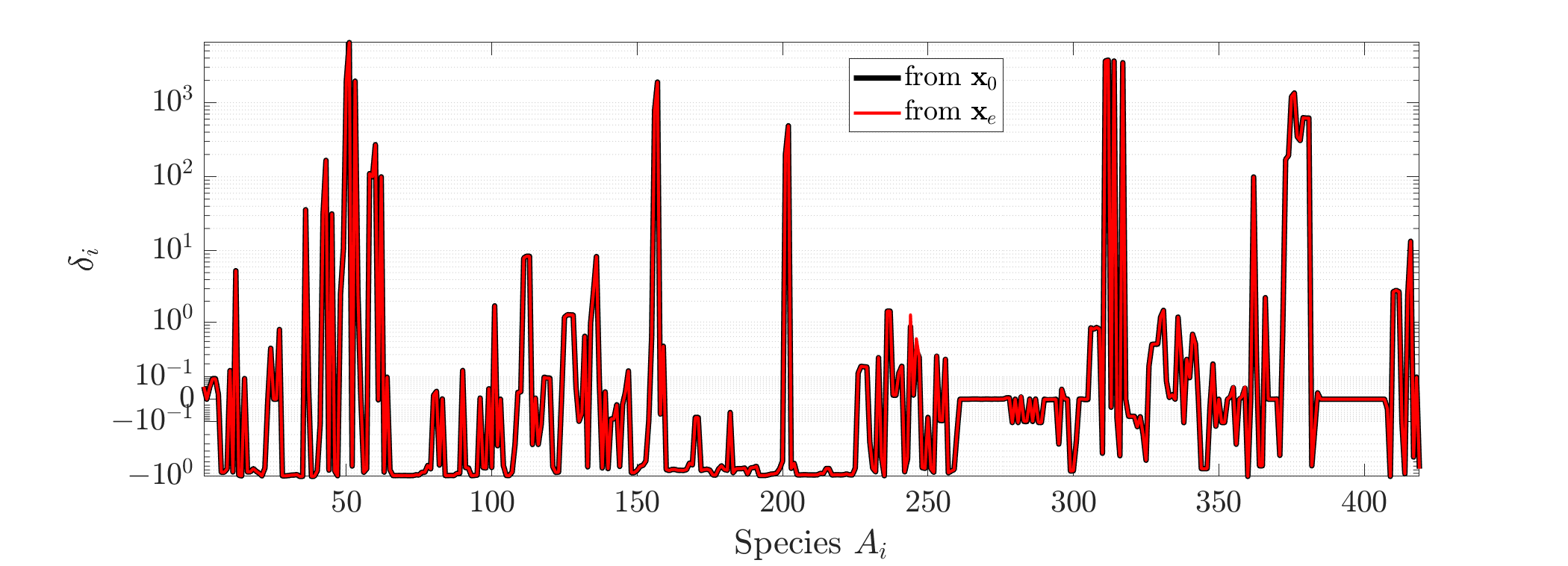}
\caption{Value of the relative difference $\delta_i$ between the steady states in the physiological cell and in the cell affected by GoF mutation of \text{BRAF}. Black and red lines are obtained when the mutated system of ODEs defined by the stoichiometric matrix $\mathcal{G}_H(\mbf{S})$ is solved with initial condition $\mbf{x}(0) = \mbf{x}_0$ and $\mbf{x}(0) = \mbf{x}_e$, respectively.}
\label{fig:gof_eq}
\end{figure*}

\begin{figure*}
  \includegraphics[width=0.9\textwidth]{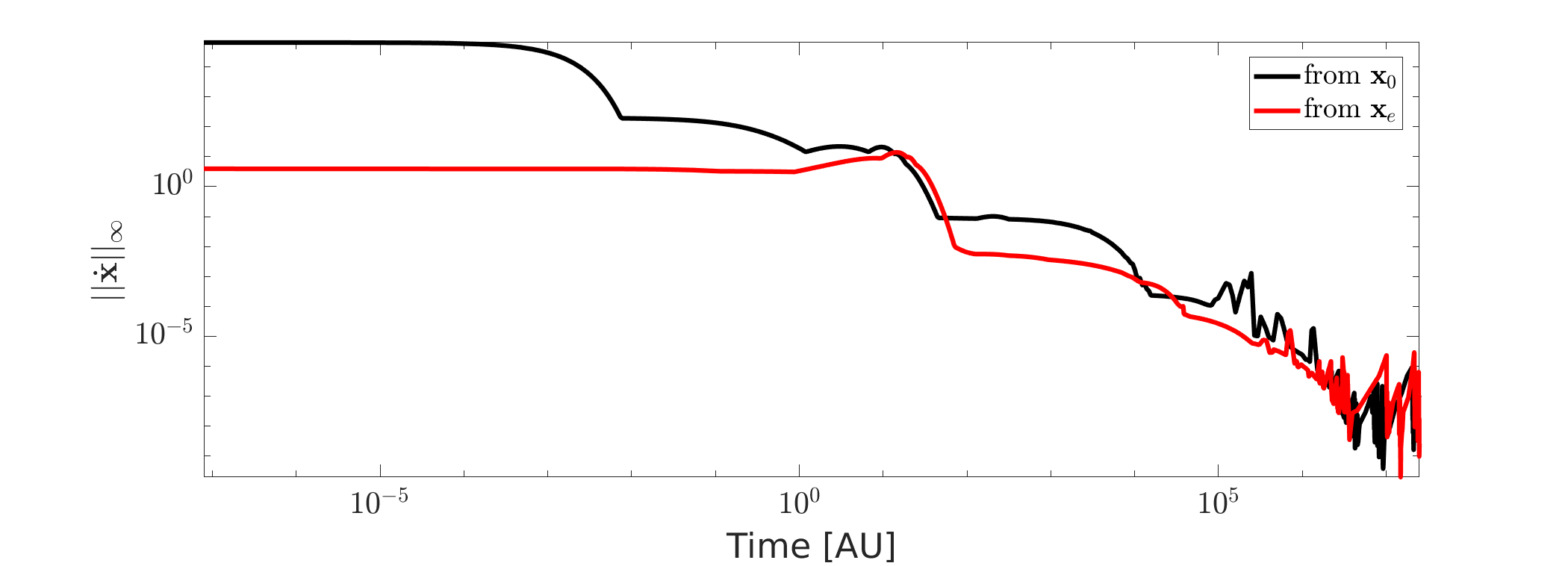}
\caption{Infinity norm of the derivative $\dot{\mbf{x}}$ of the concentration vector as function of time when the stoichiometric associated to GoF of \text{BRAF} is employed in the ODEs system (\ref{eq:dot_x}). Black and red lines show the results obtained by setting $\mbf{x}(0) = \mbf{x}_0$ and $\mbf{x}(0) = \mbf{x}_e$, respectively.}
\label{fig:gof_speed}
\end{figure*}

\section{Discussion and future directions}
In this paper we have considered a system of ODEs modeling a CRN, with emphasis on the mutual relationships between moiety CLs and stoichiometric surfaces. CLs have led to the characterization of weakly elemented 
CRN and the definition of ideal states, which are crucial in the treatment of mutations. Confining attention to systems satisfying the  global  stability  condition, we have considered two classes of mutations, LoF and GoF, often found in cancer cells. Mutations have been modeled as projection operators and basic properties of mutated networks have been analyzed, aiming at the determination of mutated equilibrium states as asymptotic steady limits of trajectories. The new model for mutations allows for a simple treatment of their combinations, showing that the resulting equilibrium is independent of the order.

As an application of the previous results we have investigated a system of ODEs describing the G1-S phase of a CRC cell. Due to the huge number of variables involved, the analysis has been based on numerical simulations. 
We have found 81 independent, linear, moiety CLs; they have been applied to support the conjecture that CRC-CRN satisfies the global stability condition. It has also been found that the elemental chemical species coincide with the basic species defined by \cite{Tortolina} on biochemical grounds. Also, a mutation by LoF, and another mutation by GoF have been examined in detail.

We are aware that this computational approach to CRN is devised to capture few, although decisive, aspects of the G1-S transition point of a cell, as well as the modifications of the network induced by cancer mutations. We think that the model can provide a guide to future experimental research based on the interpretation of results of simulations, particularly in the case of the development and optimization of targeted therapies agains already mutated cells. For example, if the simulation predicts a significant increase of the mutated concentration of a specific metabolite with respect to its physiological value, then that metabolite can be regarded as a potential drug target. More in general, an analysis of the mutated profile of the simulated CRN in the G1-S phase should provide hints about convenient choices of targets for available drugs, together with a framework for the simulation of their consequences \citep{Fachetti,Torres}. 

Possible development of our model may involve different directions. For example, we could extend it to examine alterations of mRNA induced by changes from physiologic to mutated equilibrium \citep{Castagnino}. Further, in the model we have not considered possible dependence of the literature data on temperature, pH, or other conditions. Therefore, the scheme is not capable of describing the individual response of a fixed cell; rather, it attempts at simulating the behavior of a homogeneous set of cells; from this viewpoint,  the solution of the system of ODEs may be interpreted as providing an average dynamic of the intra-cellular answer. Also, the scheme should be enriched in order to account for the impact of the extra-cellular micro-environment, which implies to modify the model in order to account for growth-induced solid stresses and pressure, nutrients and oxygen supply, and blood perfusion \citep{Caviglia,Jones,Markert}.

As a final comment, we remark that the rate constants in the CRC-CRN considered in this paper are assumed as known and estimated from the scientific literature \citep{Tortolina}. A more reliable determination of these constants can be obtained by solving a non-linear ill-posed problem  \citep{bertero2006inverse, Engl} in which the input data are provided by {\it{ad hoc}} conceived biochemical experiments. The setup of such experiments and the realization of an optimization method for the numerical solution of this inverse problem will be part of future research.

\section*{Acknowledgments}
MP has been partially supported by Gruppo Nazionale per il Calcolo Scientifico. The authors kindly acknowledge Prof. Silvio Parodi, Prof. Silvia Ravera and Dr. Mara Scussolini for their useful comments and feedback.

\bibliographystyle{spbasic}      
\bibliography{biblio}   

\end{document}